\newtheorem{theorem}{Theorem}
\newtheorem{definition}[theorem]{Definition}
 \numberwithin{equation}{section}
\newcommand{\R}{\mathbb R}
\newcommand{\na}{\nabla}
\newcommand{\lt}{\left}
\newcommand{\rt}{\right}
\newcommand{\rw}{\rightarrow}
\title[[Angular momentum and supertranslation]{Angular momentum and supertranslation in general relativity}
\author[]{{Mu-Tao Wang}}
\address{Department of Mathematics\\
Columbia University\\ New York\\ NY 10027\\ USA}
\email{mtwang@math.columbia.edu}\thanks{This material is based upon work supported by the
National Science Foundation under Grant No. DMS-1810856 and DMS-2104212}
\begin{document}

\begin{abstract}

How does one measure the angular momentum carried away by gravitational radiation during the merger of a binary black hole? This has been a subtle issue since the 1960’s due to the discovery of ``supertranslation ambiguity”: the angular momentums recorded by two distant observers of the same system may not be the same.
In this talk, I shall describe how the theory of quasilocal mass and optimal isometric embedding identifies a new definition of angular momentum that is free of any supertranslation ambiguity. This is based on joint work with Po-Ning Chen, Jordan Keller, Ye-Kai Wang, and Shing-Tung Yau.\end{abstract}

\maketitle

\tableofcontents

\section{\label{sec:level1}Introduction}

In the first observation of gravitational waves by LIGO and Virgo \cite{Abbott2016}, the event GW150914 corresponds to a binary black hole merger.  

\begin{figure}[h]
\caption{Binary black hole coalescence  \cite[figure 12.1 on page 396]{BS}}
\centering
\includegraphics[scale=0.2]{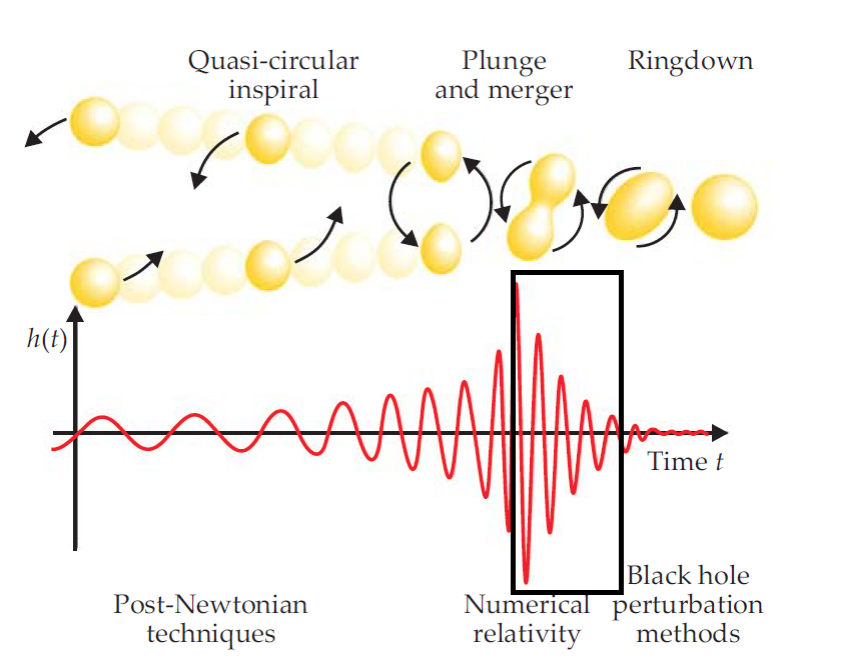}\end{figure}

The initial black hole masses are 36 units and 29 units and the final black hole mass is 62 units, with 3 units of mass radiated away in gravitational waves. The initial two black holes rotate about each other and then settle down to a single rotating black hole. One naturally wonders how much ``angular momentum" radiated away  in  gravitational waves.  This turned out to be a more subtle and challenging question due to the presence of ``supertranslation ambiguity". The amount of mass radiated away is ``supertranslation invariant", i.e. any two observers record the same amount of mass flux. On the other hand, two observers of the same system may record different amount of ``angular momentum" flux. A goal of these talks is  to explain what supertranslation is and where this ambiguity comes from.

There have been a lot of efforts to resolve the supertranslation ambiguity since the 1960's. One approach attempts to single out a ``best observer". Another approach is to find a new definition of angular momentum. Last year, the first such ambiguity free definition of angular momentum \cite{CWWY1, CKWWY} was discovered and it is the purpose of these talks to present this new definition.

\section{Expository lecture for general audience}
\subsection{Future null infinity and supertranslation}
A key axiom of general relativity is that the speed of light is a universal constant. Fixing this constant to be one in a spacetime depiction where the vertical direction pointing upward is the future time direction, light travels in a 45 degree upward direction along a so-called ``null geodesic".  Light rays/null geodesics emanating from an astronomical event (such as a binary black hole merger) in all directions at an instant form a ``null hypersurface". 
These null hypersurfaces can be labelled by the ``retarded time" that is usually denoted by $u$.

\begin{figure}[h]
\caption{A null hypersurface that corresponds to $u=u_1$}
\centering
\includegraphics[scale=0.15]{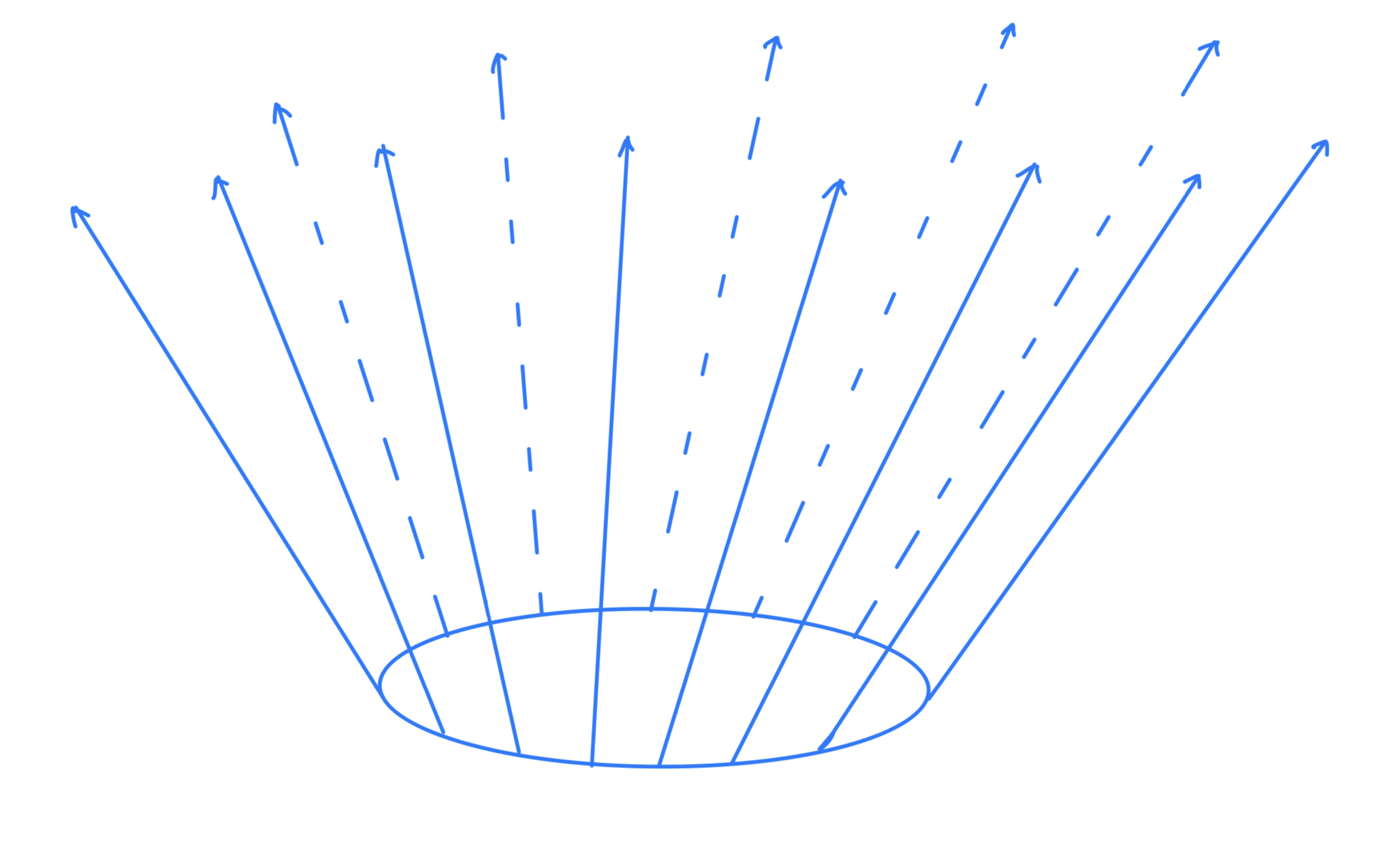}\end{figure}

\newpage
\begin{figure}[h]
\caption{Two null hypersurfaces that correspond to $u=u_0$ and $u=u_1$}
\centering
\includegraphics[scale=0.14]{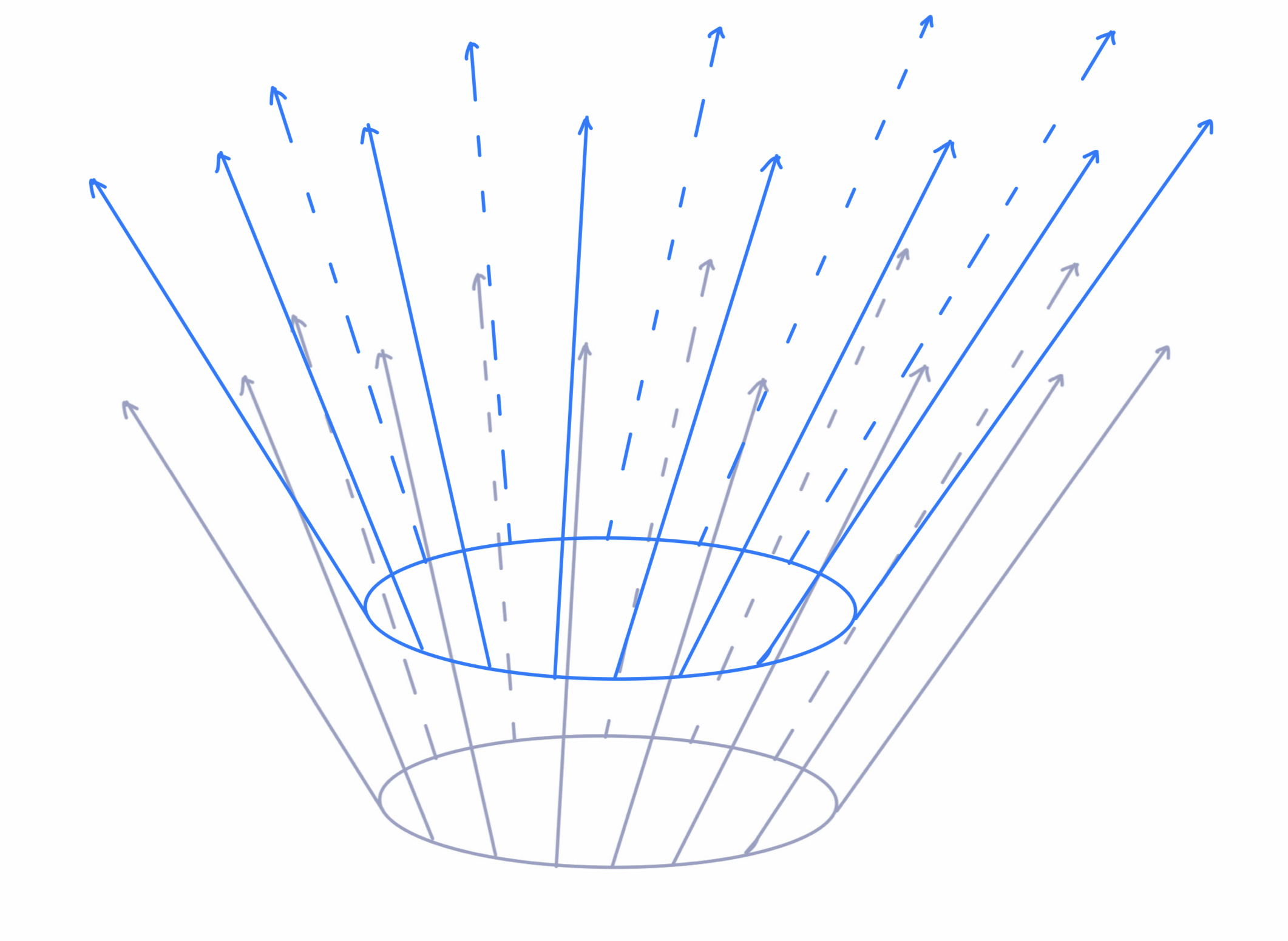}\end{figure}

\newpage

An idealized distant observer is situated at the collection of the ends of these null geodesics/hypersurfaces, called future null infinity and denoted as $\mathscr{I}^+$.
For an isolated gravitating system, the gravitational field is weak at future null infinity and the system corresponds to an asymptotically flat spacetime, i.e. a spacetime that approaches the flat Minkowski spacetime at a faraway distance.  Penrose's conformal compactification \cite{Penrose3, Penrose4} identifies the future null infinity of an asymptotically flat spacetime as the product of a two-sphere $S^2$ (all directions) and a real line $(-\infty, \infty)$ parametrized by the retarded time $u$. $u=-\infty$ corresponds to spacelike infinity and $u=\infty$ corresponds to timelike infinity.

\begin{figure}[h]
\caption{Future null infinity $\mathscr{I}^+$}
\centering
\includegraphics[scale=0.5, angle=270]{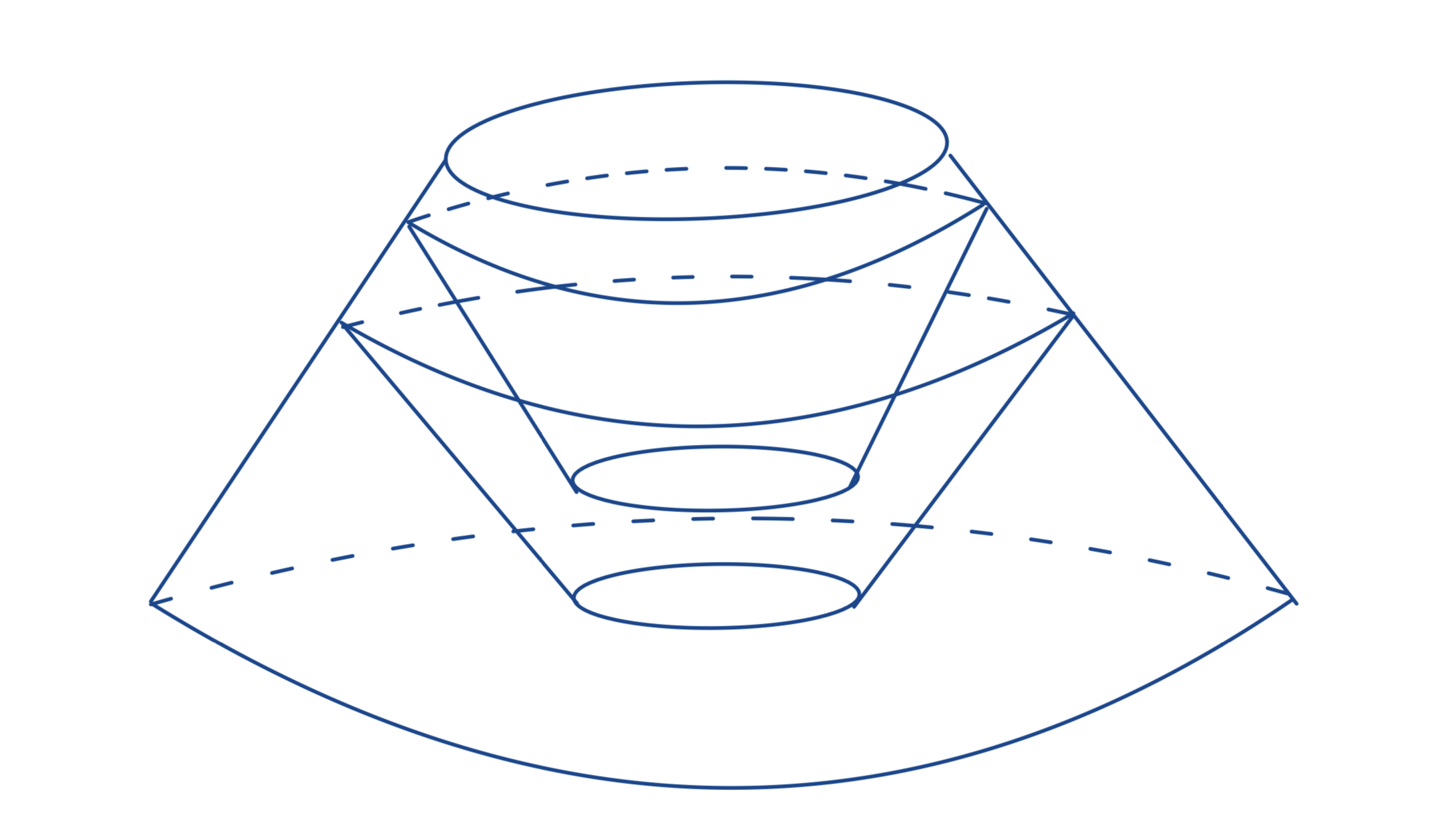}\end{figure}

\newpage

In the Minkowski spacetime with coordinates $(t, x, y, z)$ and the spacetime metric \[-dt^2+dx^2+dy^2+dz^2,\] $u$ can be taken to be $u=t-r$ for $r=\sqrt{x^2+y^2+z^2}$. The level sets of $u$ are the null hypersurfaces described as above and cut the future null infinity at two-sphere's. 

When we translate the coordinate system $(t, x, y, z)$, we obtain a new retarded time $\bar{u}=\bar{t}-\bar{r}$ that is of the form
\[ \bar{u}=u+\alpha_0+\alpha_1 \sin\theta\sin\phi+\alpha_2 \sin\theta \cos\phi+\alpha_3 \cos\theta+ O(\frac{1}{r}),\] where $x=r\sin\theta\sin\phi, y=r\sin\theta\cos\phi, z=\cos\theta$ and $\alpha_0, \alpha_1, \alpha_2, \alpha_3$ are the translation constants. See the appendix for the calculation.

It was discovered by Bondi-Metzner-Sachs (BMS) \cite{BVM, Sachs}  in the 1960's that one can take $\bar{u}$ of the form 
\[ \bar{u}=u-f(\theta, \phi)+O(\frac{1}{r})\] for any smooth function $f$ on $S^2$, such that the level sets of $\bar{u}$ are still null hypersurfaces and the spacetime metric only differs from the Minkowski metric by a term of the order $O(\frac{1}{r})$. This is exactly the form of an asymptotically flat spacetime considered by Bondi et al.

For a general asymptotically flat spacetime, one can perform such a coordinate change as well. The new retarded time $\bar{u}$ is indistinguishable from the original retarded time $u$ for a general isolated gravitating system with radiation. The symmetry of future null infinity is thus the infinite dimensional BMS group, which is much larger than the 10 dimensional Poincar\'e group (the symmetry group of the Minkowski spacetime).

 When  $f$ is of harmonic mode $\ell=1$ (i.e. of the form $\alpha_0+\alpha_1 \sin\theta\sin\phi+\alpha_2 \sin\theta \cos\phi+\alpha_3 \cos\theta$), the coordinate change of $u$ to ${\bar u}=u-f$ is an ``ordinary" translation.
The change by a higher mode function on the two-sphere is thus called a ``supertranslation". Ordinary translations are part of the Poincar\'e group, while supertranslations are not. 
In general, the supertranslation ambiguity corresponds to different yet indistinguishable choices of foliations of future null infinity, or different idealized distant observers of the isolated gravitating system. 

\begin{figure}[h]
\caption{Consider the future null infinity as a lampshade. Different retarded times correspond to different markings on the lampshade}
\centering
\includegraphics[scale=0.2]{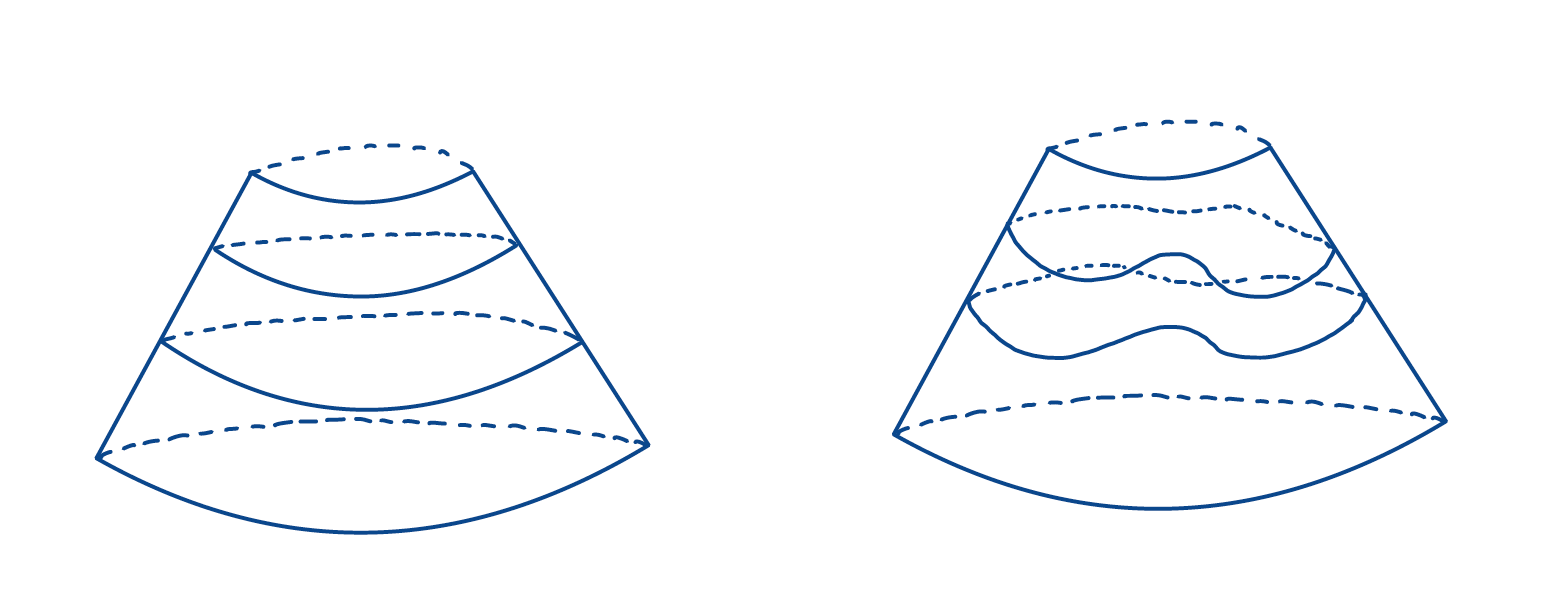}\end{figure}

\newpage

\subsection{Bondi energy/mass at future null infinity}

For a fixed retarded time $u$, to each $u$ slice (or level set of $u$) $u=u_0$, there is an associated notion of Bondi mass/energy $E(u_0)$. The Bondi mass satisfies the well-known mass loss formula, i.e. $E(u_0)\geq E(u_1)$ if $u_0\leq  u_1$, which manifests how mass radiated away along the null directions. This is one of the first theoretical verifications of the wave nature of gravitation.

\begin{figure}[h]
\caption{Mass-loss due to radiation from Penrose \cite{Penrose2}}
\centering
\includegraphics[scale=0.14]{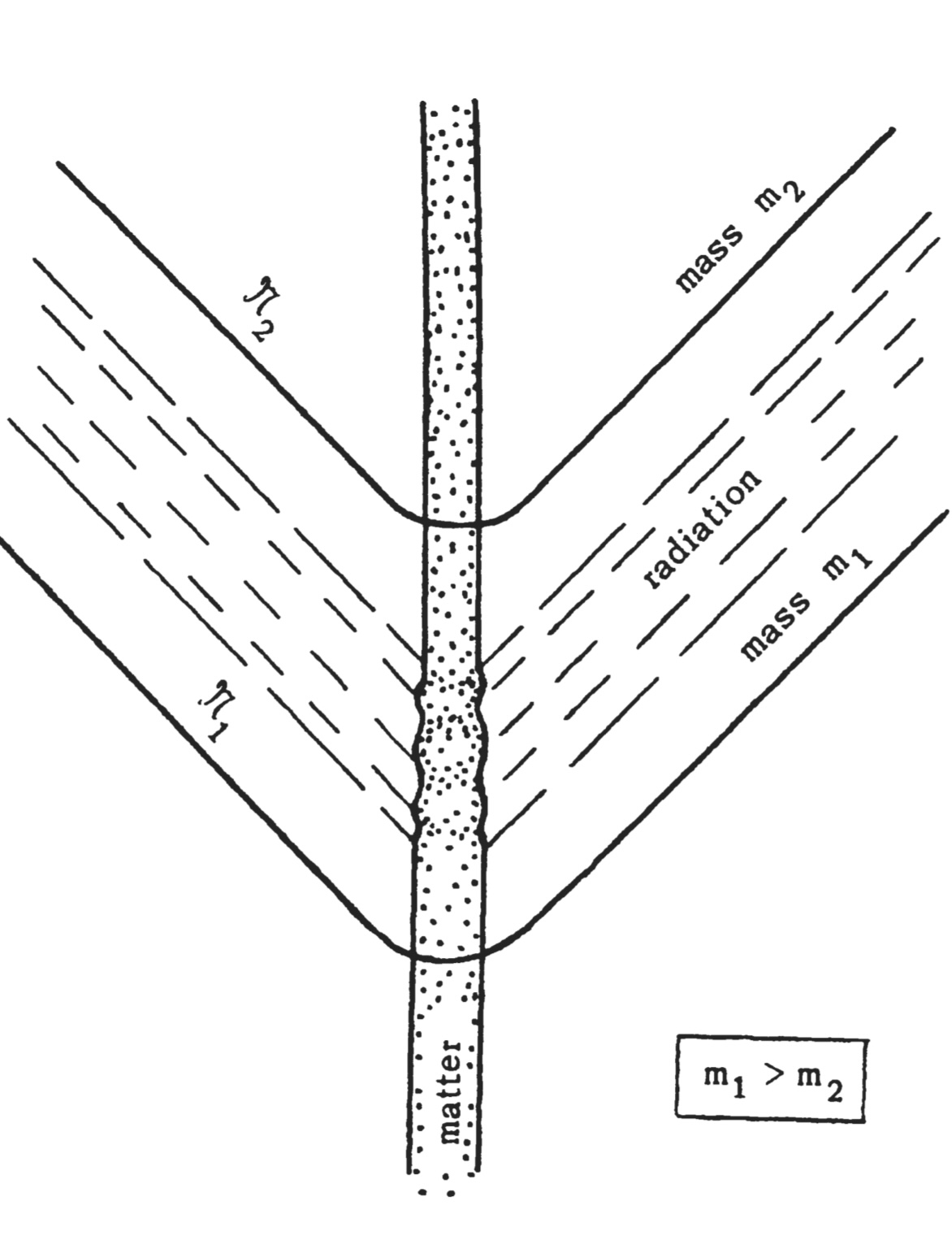}\end{figure}

 It is also proved to be positive and thus an isolated system cannot radiate away more mass than it has initially (the Arnowitt-Deser Misner (ADM) mass at $u=-\infty$).

 Moreover, the mass radiated away or the total flux of mass is supertranslation invariant, i.e. 

\begin{equation}\label{super-inv} \lim_{u\rightarrow \infty} E(u)-\lim_{u\rightarrow -\infty} E(u)= \lim_{\bar{u} \rightarrow \infty} E(\bar{u})-\lim_{\bar{u}\rightarrow -\infty} E(\bar{u}),\end{equation} if $u$ and $\bar{u}$ are related by a supertranslation or an ordinary translation. 

\subsection{The story of angular momentum}

On the other hand, the story for angular momentum is quite different. 

 In classical mechanics, the angular momentum of a particle is defined as
\[ J = m \mathbf{r} \times \mathbf{r}'. \]  If we translate the origin (or the coordinate system) to $\tilde{O}$, the angular momentum $J$ gets shifted by the linear momentum $\mathbf{p}$
\[ J_{\tilde{O}} = m (\mathbf{r} - \mathbf{a}) \times \mathbf{r}' = J - \mathbf{a} \times \mathbf{p}.\]

\begin{figure}[h]
\caption{Angular momentum gets shifted by the linear momentum}
\centering
\includegraphics[scale=0.09]{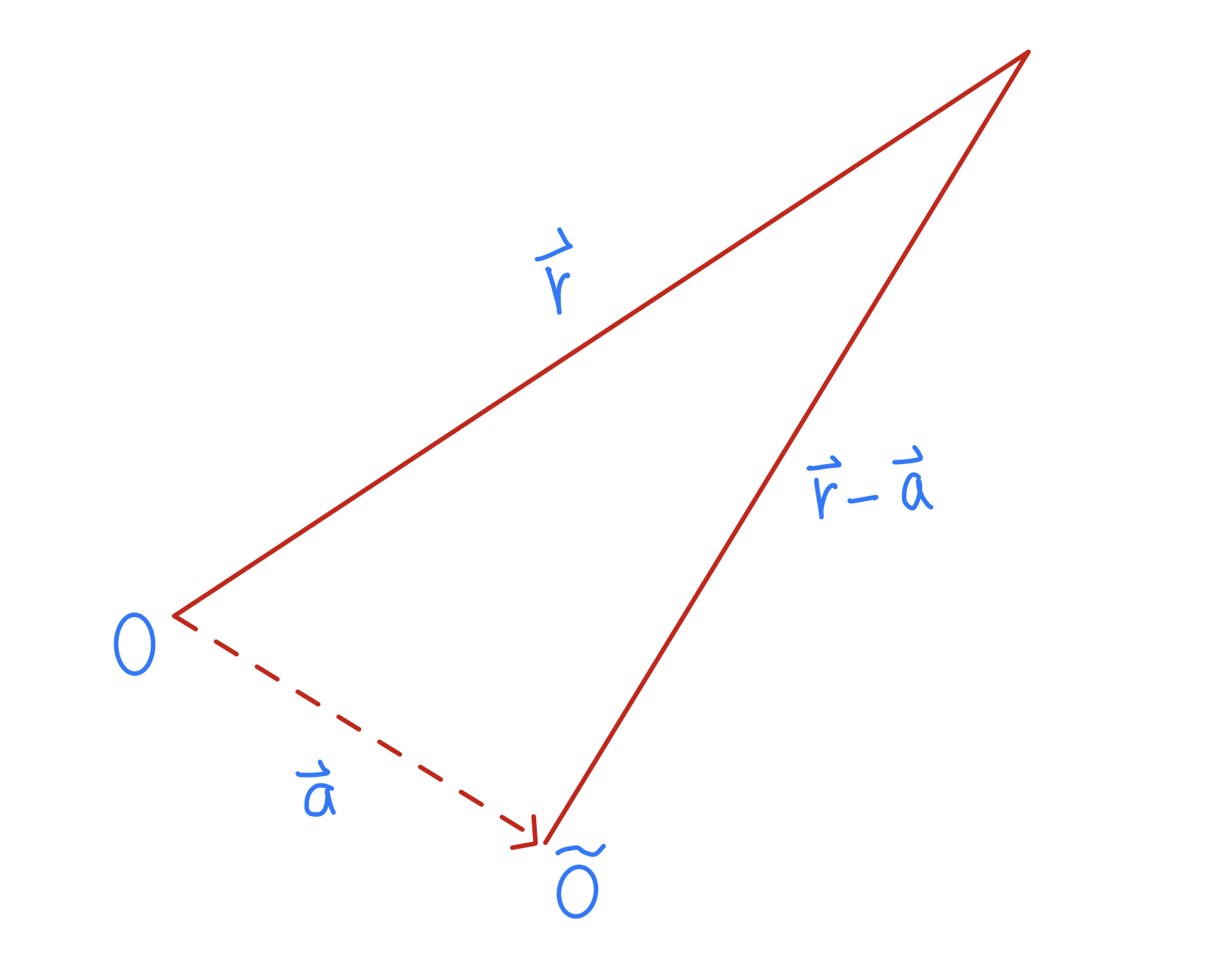}\end{figure}

In special relativity in which one deals with matter fields on the flat Minkowski spacetime. Each matter field has an energy-momentum density tensor. Pairing the energy-momentum density tensor with one of the ten dimensional Killing fields gives a conserved quantity. In particular, a rotation Killing field such as $x\partial_y-y\partial_x$ gives an angular momentum.

 In general relativity, the spacetime is a Lorenzian 4-manifold that satisfies the Einstein equation. There is no energy-momentum density for gravitation due to Einstein's equivalence principle (or the energy of the gravitational field cannot be localized, see Chapter 20 of Misner-Thorne-Wheeler, \cite{MTW}. In addition, there is no Killing field on a general spacetime. However, near the infinity of an isolated gravitating system/asymptotically flat spacetime, the spacetime is asymptotic to the Minkowkian spacetime. Defining angular momentum at null infinity involves finding an ``asymptotic" energy-momentum density and an ``asymptotic" Killing field. However, this becomes very subtle when radiation is present.

 There have been various proposals of angular momentum at future null infinity since 1960's. Different approaches (Hamiltonian, spinor-twistor, Komar type etc.) have led to different definitions such as Newmann-Penrose \cite{NP2}, Winicour-Tamburino \cite{WT}, Bramson 1975 \cite{Bramson}, Ashtekar-Hansen 1978 \cite{AH}, Penrose 1982 \cite{Penrose1},  Ludvigsen-Vickers 1983 \cite{LV}, 
Dray-Streubel 1984 \cite{DS}, Moreschi 1986 \cite{Moreschi}, Dougan-Mason 1991 \cite{DM}, Rizzi 1997 \cite{Rizzi}, Chru\'sciel-Jezierski-Kijowski 2002 \cite{CJK}, Helfer 2007 \cite{Helfer1, Helfer2}, Barnich-Troessaert 2011\cite{BT}, Hawking-Perry-Strominger 2017 \cite{HPS}, Javadinezhad-Kol-Porrati 2019 \cite{JKP19}, Klainerman-Szeftel 2019 \cite{KS}, Comp\`ere-Oliveri-Seraj 2020 \cite{COS} etc. 

Our approach to resolve the ambiguity is to find a new definition of angular momentum $J(u)$ whose total flux satisfies the supertranslation invariance/covariance property
To obtain an angular momentum for a null hypersurface $u=u_0$, we consider a family of two-surfaces parametrized by $r$ ($4\pi r^2$ is the area of such a surface) such that $r\rightarrow \infty$ near future null infinity.

\begin{figure}[h]
\caption{A null hypersurface $u=u_0$ foliated by $r$ level sets}
\centering
\includegraphics[scale=0.6]{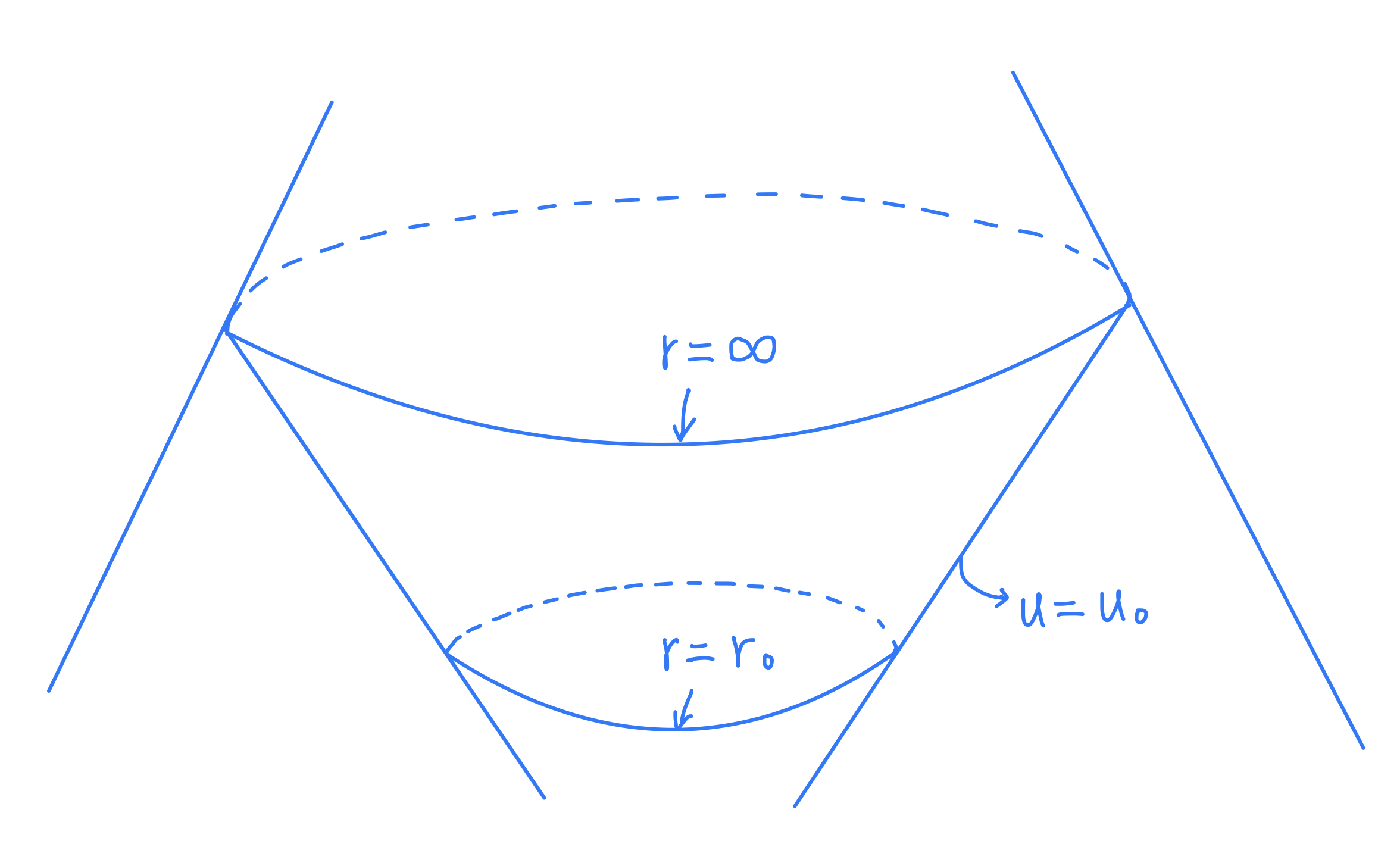}\end{figure}

Therefore, the null hypersurface $u=u_0$ is foliated by 2-surfaces given by $u=u_0, r=r_0$, on which a notion of quasilocal angular momentum $J(u_0, r_0)$ was defined by Chen-Wang-Yau in 2015 \cite{CWY3, CWY4}. These surfaces approach the cut at null infinity as $r\rightarrow \infty$ and we define $J(u)=\lim_{r\rightarrow\infty} J(u, r)$.  It is proved in \cite{CWWY1, CKWWY} that the total flux of $J(u)$ satisfies the desired invariant property, i.e. 

\[ \lim_{u\rightarrow \infty} J(u)-\lim_{u\rightarrow -\infty} J(u)= \lim_{\bar{u} \rightarrow \infty} J(\bar{u})-\lim_{\bar{u}\rightarrow -\infty} J(\bar{u}),\] if $u$ and $\bar{u}$ are related by a supertranslation.

On the other hand, when $u$ and $\bar{u}$ are related by an ordinary translation, the corresponding total fluxes are related by an exact analogue 
of the transformation formula in classical mechanics. 
The supertranslation invariance/covariance property will be explained in terms of the Bondi-Sachs coordinate system, in which the supertranslation ambiguity was first discovered. Such ambiguity is indeed ubiquitous in any description of future null infinity. See \cite{CWWY3} for supertranslation ambiguity in double null gauge near null infinity.

\section{In-depth discussion}
\subsection{Quasilocal mass and angular momentum }

The definitions of conserved quantities such as mass and angular momentum have been among the most difficult problems since the genesis of general relativity. According to Einstein's equivalence principle, there is no density for gravitation and no canonical coordinate system for spacetime. The issue is further complicated by the nonlinear nature of Einstein's eponymous equation.

 The notion of quasilocal mass is attached to a 2-dimensional closed surface $\Sigma$ which bounds a spacelike region in spacetime. $\Sigma$ is assumed to be a topological 2-sphere, but with different intrinsic geometry and extrinsic geometry, we expect to read off the effect of gravitation in the spacetime vicinity of the surface. Suppose the surface is spacelike, i.e. the induced metric $\sigma$ is Riemannian.  An essential part of the extrinsic geometry is measured by the mean curvature vector field $\bf{H}$ of $\Sigma$. $\bf{H}$ is a normal vector field of the surface such that the null expansion along any null normal direction $\ell$ is given by the pairing $\langle {\bf H}, \ell\rangle$ of
$\bf{H}$ and $\ell$.

In \cite{Wang-Yau1}, Wang-Yau proposed the following definition of quasilocal mass which depends only on $\sigma$ and $\bf{H}$ of a 2-surface $\Sigma$ in spacetime.  To evaluate the quasilocal mass of $\Sigma$ with the physical data $(\sigma, \bf{H})$, one first solves the optimal isometric embedding equation, see \eqref{oiee} below, which gives an embedding of $\Sigma$ into the Minkowski spacetime with the image
surface $\Sigma_0$ that has the same 
induced metric as $\Sigma$, i.e. $\sigma$. One then compares the extrinsic geometries of $\Sigma$ and $\Sigma_0$ and evaluates the quasilocal mass from $\sigma, \bf{H}$ and $\bf{H_0}$.

\begin{figure}[h]
\caption{Isometric embedding into the Minkowski spacetime}
\centering
\includegraphics[scale=0.2]{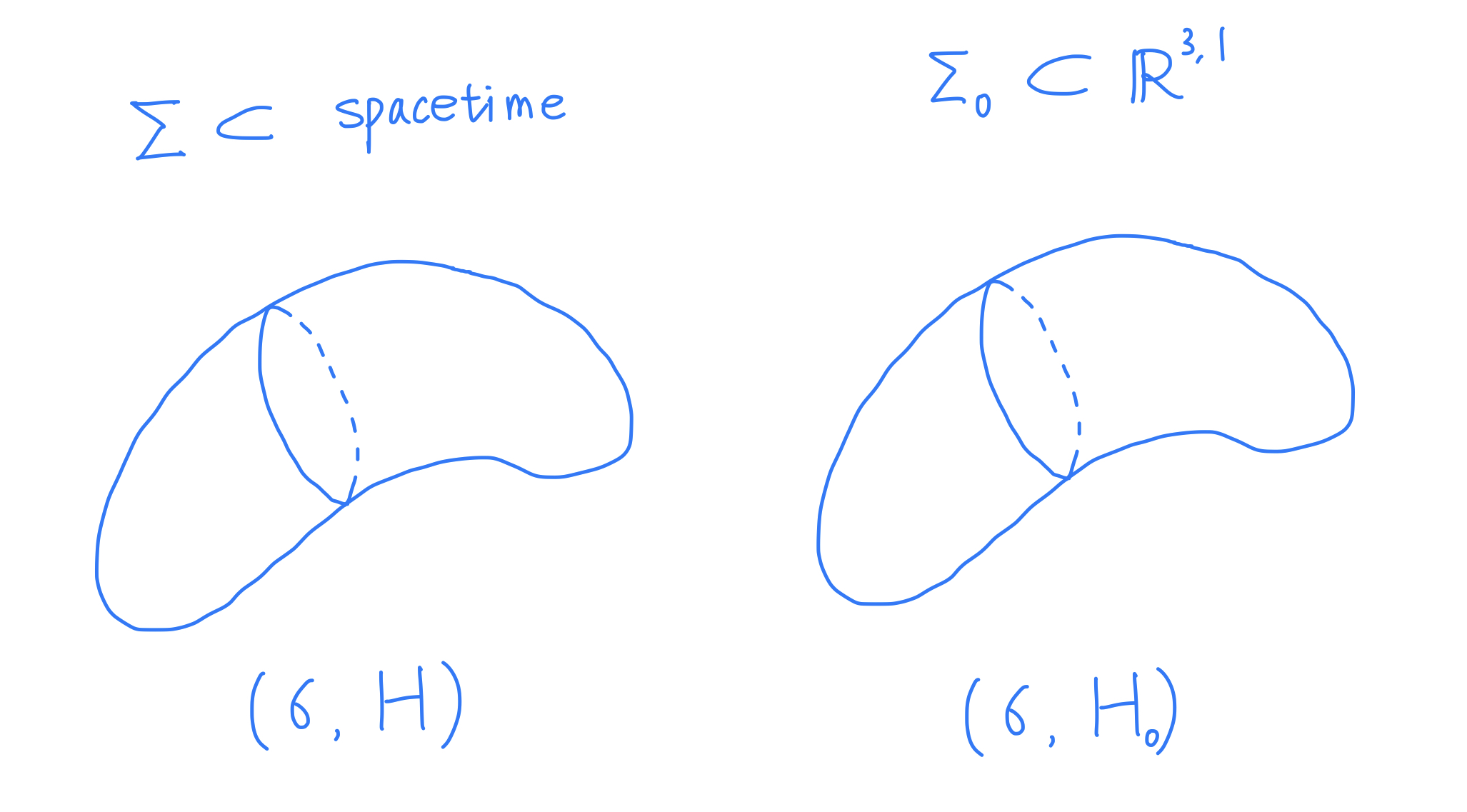}\end{figure}

Assuming the mean curvature vector ${\bf H}$ is spacelike, the physical surface $\Sigma$ with physical data $(\sigma, \bf{H})$ gives $(\sigma, |\bf{H}|, \alpha_{\bf {H}})$ where $|{\bf H}|>0$ is the Lorentz norm of $\bf{H}$ and $\alpha_{\bf H}$ is the connection one-form determined by $\bf{H}$. Given an isometric embedding $X:\Sigma\rightarrow \R^{3,1}$ of $\sigma$. Let $\Sigma_0$ be the image $X(\Sigma)$ and $(\sigma, |\bf{H}_0|, \alpha_{\bf {H}_0})$ be the corresponding data of $\Sigma_0$ (${\bf H_0}$ is again assumed to be spacelike).

Let $T$ be a future timelike unit Killing field of $\R^{3,1}$ and define $\tau=-\langle X, T\rangle$ as a function on $\Sigma$. Define a function $\rho$ and a 1-form $j_a$ on $\Sigma$:
  \[ \begin{split}\rho &= \frac{\sqrt{|{\bf H}_0|^2 +\frac{(\Delta \tau)^2}{1+ |\nabla \tau|^2}} - \sqrt{|{\bf H}|^2 +\frac{(\Delta \tau)^2}{1+ |\nabla \tau|^2}} }{ \sqrt{1+ |\nabla \tau|^2}}\\
 j_a&=\rho {\nabla_a \tau }- \nabla_a \left( \sinh^{-1} (\frac{\rho\Delta \tau }{|{\bf H}_0||{\bf H}|})\right)-(\alpha_{{\bf H}_0})_a + (\alpha_{{\bf H}})_a, \end{split}\] where $\nabla_a$ is the covariant derivative with respect to the metric $\sigma$, $|\nabla \tau|^2=\nabla^a \tau\nabla_a \tau$ and $\Delta \tau=\nabla^a\nabla_a \tau$. 
$\rho$ is the quasilocal mass density and $j_a$ is the quasilocal momentum density. A full set of quasilocal conserved quantities was defined in \cite{CWY3, CWY4} using $\rho$ and $j_a$. 
 
The optimal isometric embedding equation for $(X, T)$ is 
\begin{equation}\label{oiee} \begin{cases}
\langle dX, dX\rangle&=\sigma\\
\nabla^a j_a&=0.
\end{cases}\end{equation}
The first equation is the isometric embedding equation into the Minkowski spacetime and the second one is the Euler-Lagrange equation of the quasilocal energy $E(\Sigma, \tau)$ \cite{Wang-Yau1,Wang-Yau2} in the space of isometric embeddings.
The quasi-local mass for the optimal isometric embedding $(X, T)$ is defined to be \[E(\Sigma, X, T)=\frac{1}{8\pi}\int_\Sigma \rho.\]
 It is shown in \cite{Wang-Yau1, Wang-Yau2} that $E(\Sigma, X, T)$ is { positive in general}, and { zero for surfaces in the Minkowski spacetime}. 

The theory of quasilocal mass and optimal isometric embedding was employed by Chen-Wang-Yau in \cite{CWY3, CWY4} to define quasilocal conserved quantities. For an optimal isometric embedding $(X, T)$, by restricting a rotation (or boost) Killing field $K$ of $\R^{3,1}$ to $\Sigma_0=X(\Sigma)\subset \R^{3,1}$, the quasi-local conserved quantity is defined to be: 
\[-\frac{1}{8\pi} \int_\Sigma \langle K, T\rangle \rho+(K^\top)^a  j_a ,\] where $K^\top$ is the component of $K$ that is tangential to $\Sigma_0$.
In particular, $K=x^i\partial_j-x^j \partial_i, i<j$ defines an angular momentum with respect to $\partial_t$. Here $(t, x^i)$ and $(\partial_t, \partial_i)$ are standard coordinates and coordinate vectors of the Minkowski spacetime.

The image of the optimal isometric embedding $\Sigma_0$ is essentially the ``unique" surface in the Minkowski spacetime that best matches the physical surface $\Sigma$. 
If the original surface $\Sigma$ happens to be a surface in the Minkowski spacetime, the above procedure identifies $\Sigma_0=\Sigma$ up to a global isometry. The theorems in \cite{CWY2} allow us to solve the optimal isometric embedding system for configurations that limit
to a surface in the Minkowski spacetime. This is in particular sufficient for calculations at infinity of an isolated system.

For surfaces near the null infinity, the system of equations  \eqref {oiee} has a unique solution, which can be calculated explicitly \cite{CWY1} and numerically. The limit of the Wang-Yau quasilocal energy momentum in Bondi-Sachs coordinates was evaluated in \cite{CWY1}.

 Transplanting a rotation Killing field $K$ of $\R^{3,1}$ through this unique solution, we define the quasilocal angular momentum on $\Sigma$ as
\[
\int_\Sigma j(K^T).
\]

The limit of the Chen-Wang-Yau quasilocal  angular momentum defines the total angular momentum at null infinity. This can be carried out for very general spacetimes without any peeling structure at null infinity.

\subsection{Bondi-Sachs coordinates and supertranslation invariance of the Bondi-Sachs energy-momentum}

An important description of future null infinity is the Bondi-Sachs coordinate system \cite{BVM, Sachs} in terms of which the action of the BMS group was originally formulated.

\begin{figure}[h]
\caption{Future null infinity $\mathscr{I}^+$ from $u=-\infty$ ($i^0$) to $u=\infty$ ($i^+$)}
\centering
\includegraphics[scale=0.4]{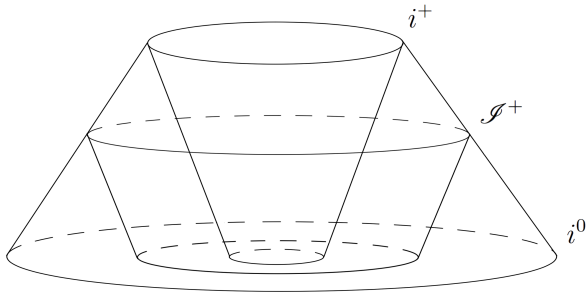}\end{figure}

In a  Bondi-Sachs coordinate system $(u, r,  x^2, x^3)$, the physical spacetime metric takes the form
\begin{equation}\label{spacetime_metric} -UV du^2-2U dudr+r^2 h_{AB}(dx^A+W^A du)(dx^B+W^B du), A, B=2, 3.\end{equation} The future null infinity $\mathscr{I}^+$ corresponds to the idealized null hypersurface $r=\infty$ and can be viewed $\mathscr{I}^+=I\times S^2$ with coordinates $(u, x)$ where $u\in I$ and $x=(x^2, x^3)\in S^2$, the unit sphere. The outgoing radiation condition \cite{Sachs} and the vacuum Einstein equation imply the following expansions in inverse powers of $r$:
 \[\begin{split}
V&=1-\frac{2m}{r}+  O(r^{-2}),\\
W^A&= \frac{1}{2r^2} \na_B C^{AB} + O(r^{-3}),\\
h_{AB}&={\sigma}_{AB}+\frac{C_{AB}}{r}+ O(r^{-2}),\end{split} \] where $\sigma_{AB}(x)$ is a standard round metric on $S^2$ and $\nabla_A$ denotes the covariant derivative with respect to $\sigma_{AB}$. The indices are contracted, raised, and lowered with respect to the metric $\sigma_{AB}$.
Defining on $\mathscr{I}^+$ $(r=\infty)$ are the {\it mass aspect} $m=m(u, x)$, the {\it angular aspect} $N_A = N_A(u, x)$, and the {\it shear} $C_{AB}=C_{AB}(u, x)$ of this Bondi-Sachs coordinate system.  We also define the {\it news} $N_{AB} = \partial_u C_{AB}$.  

A {\it supertranslation} is a change of coordinates $(\bar{u}, \bar{x})\rightarrow (u, x)$ at $\mathscr{I}^+$ such that 
\begin{equation}\label{coord_change} u = \bar u + f (x), x=\bar{x}\end{equation} for a smooth function $f$ that is defined on $S^2$. It can be shown that this can be extended to $(\bar{u},\bar{r}, \bar{x})\rightarrow (u, r, x)$ of two Bondi-Sachs coordinate systems.

\begin{figure}[h]
\caption{Two retarded times related by a supertranslation}
\centering
\includegraphics[scale=0.13]{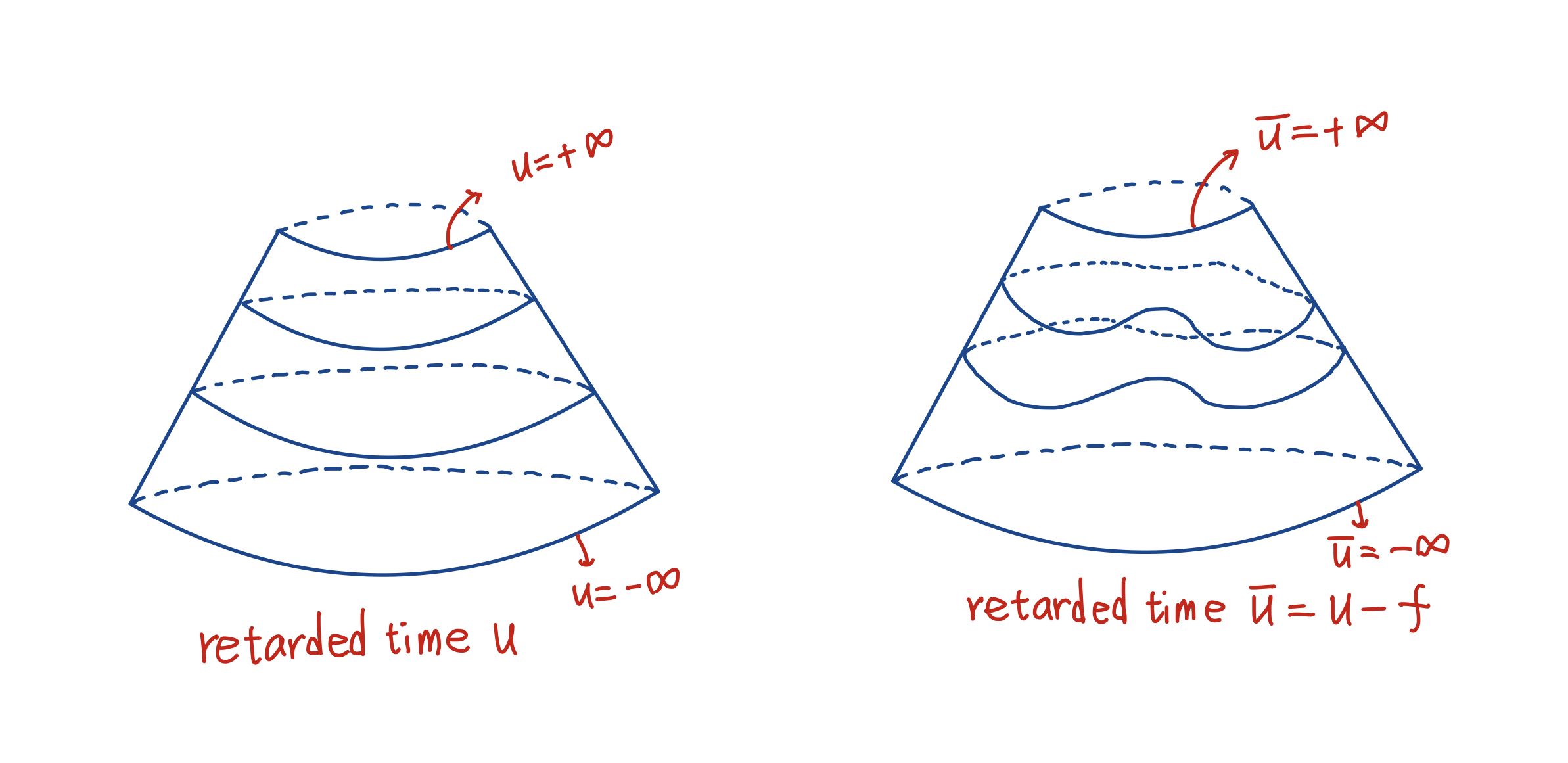}\end{figure}

The standard formulae for the Bondi-Sachs energy-momentum \cite{BVM, Sachs} at a $u$ cut along $\mathscr{I}^+$ are 
\begin{equation}\label{EM} E (u) = \int_{S^2} 2m (u, \cdot), P^k (u) = \int_{S^2} 2m (u, \cdot) \tilde{X}^k, k=1, 2, 3\end{equation}
where $\tilde{X}^k, k=1, 2 ,3$ are the standard coordinate functions on $\mathbb{R}^3$ restricted to $S^2$. 

The vacuum Einstein equation  implies (see \cite{CJK, MW})
\begin{equation}\label{du_m} \partial_u m=-\frac{1}{8}N_{AB}N^{AB}+\frac{1}{4}\na^A\na^B N_{AB}\end{equation} from which the Bondi mass loss formula follows:

\[\partial_u E(u)=-\int_{S^2}  N_{AB}N^{AB} \leq 0.\]
We assume $\mathscr{I}^+$ extends from $i^0$ ($u=-\infty$) to $i^+$ ($u=+\infty$) and that there exists a constant $\varepsilon>0$ such that
\begin{equation}\label{news_decay}
N_{AB}( u,x) = O(|u|^{-1-\varepsilon}) \mbox{ as } u \rw \pm\infty.
\end{equation}

Therefore the total flux of $E(u)$ is 
\begin{equation}\label{flux} \lim_{u\rightarrow +\infty} E(u)-\lim_{u\rightarrow -\infty} E(u)=\int_{-\infty}^{+\infty} \int_{S^2}  N_{AB}N^{AB}\end{equation}

For a supertranslation \eqref{coord_change}, it is known that the news $ \bar{N}_{AB}(\bar u,x)$  in the $(\bar u, \bar{x})$ coordinate system are related to the news $ {N}_{AB}(u,x)$  in the $(u, x)$ coordinate system through:
\begin{equation}\label {news}\bar{N}_{AB}(\bar u,x) = N_{AB}(\bar u+f(x),x).\end{equation} See \cite[(C.117) and (C.119)]{CJK} for example. Together with \eqref{news_decay} and \eqref{flux}, the supertranslation invariance of the total flux of mass \eqref{super-inv} follows. The total energy radiated away is well-defined.

\subsection{Supertranslation invariance of total flux of angular momentum }
The definition of angular momentum turns out to be more subtle, and relies on higher order expansion of $W^A$:
\[W^A=\frac{1}{2r^{2}} {\nabla}^D C_D^A+\frac{1}{r^3}(\frac{2}{3} N^A-\frac{1}{16} {\nabla}^A (C_{DE} C^{DE})-\frac{1}{2} C_B^{\,\,A} {\nabla}^D C_{D}^{\,\,B})+O(r^{-4}).\] 
The one-form $N_A (u, x)=\sigma_{AB} N^B (u, x)$ is called the {\it angular momentum aspect}, which appears in  all existing definitions of angular momentum at $\mathscr{I}^+$.
However, the transformation formula of $N_A$ under supertranslation is rather complicated.
Indeed, the news $N_{AB}$ is the only quantity at $\mathscr{I}^+$ that transforms in a simple manner. See \eqref{supertranslation} for the transformation formulae of $m$ and $C_{AB}$ under a supertranslation. This presents great difficulty in finding a well-defined angular momentum definition at null infinity and none of the existing definitions are known to be supertranslation invariant/covariant. 
There were efforts to eliminate supertranslation ambiguity by choosing special
foliations of $\mathscr{I}^+$, for example,  ``nice sections"by Moreschi \cite{Moreschi},  or ``preferred cuts" by Rizzi \cite{Rizzi}. However, they work only in special cases which seem to exclude black hole formations \cite[Reference 14]{Rizzi}. There were also efforts to identify special conditions on null infinity that would make known definitions of angular momentum supertranslation invariant \cite{ADK1, ADK2}.

In 1982, Penrose \cite{Penrose2} made the following statement:

The very concept of angular momentum gets ``shifted" by supertranslations and ``it is hard to see in these circumstances how one can rigorously discuss such questions as the angular momentum carried away by gravitational radiation".

 In a Bondi-Sachs coordinate system, on a fixed $u$-slice, consider 2-surfaces $r=r_0$ and let $r_0\rightarrow \infty$. The limit of the CWY quasilocal angular momentum at $\mathscr{I}^+$ on this family of 2-surfaces was evaluated by Keller-Y.-K. Wang-Yau in \cite{KWY}. The answer depends on a new terms $c$ given by the following decomposition of $C_{AB}$ and has never occurred in any previous definition of angular momentum.  We consider the decomposition of $C_{AB}$ into 
\begin{equation}\label{CAB}C_{AB}=\nabla_A\nabla_B c-\frac{1}{2} \sigma_{AB} \Delta c+\frac{1}{2}(\epsilon_A^{\,\,\,\, E} \nabla_E \nabla_B \underline{c}+\epsilon_B^{\,\,\,\, E} \nabla_E \nabla_A \underline{c})\end{equation} where $\epsilon_{AB}$ denotes  the volume form of $\sigma_{AB}$.  The functions $c=c(u, x)$ and $\underline{c}=\underline{c}(u, x) $ are called the closed and co-closed potentials of $C_{AB}(u, x)$, repsectively. They are chosen to be of $\ell\geq 2$ harmonic modes and thus such a decomposition is unique.

The term  $c=c(u, x)$ arises in the CWY angular momentum through solving the optimal isometric embedding equation which corresponds to $\Delta(\Delta+2) c=\nabla^A\nabla^B C_{AB}$.

The asymptotic symmetry of $\mathscr{I}^+$ consists of the BMS fields \cite{Sachs}.  We say a BMS field $Y$ is a {\it rotation BMS field} if in a Bondi-Sachs coordinate system
$(u, x)$, 
\begin{equation}\label{rotation_BMS}Y=\hat{Y}^A\frac{\partial}{\partial x^A}\end{equation} where $\hat{Y}^A(x)$ is a rotation Killing field on $S^2$.

The limit of the CWY angular momentum \cite{KWY} is shown to be

\begin{definition}
For a rotation BMS field $Y$ that is tangent to $u$ cuts on $\mathscr{I}^+$, the  angular momentum of a $u$ cut is defined to be :
\begin{align}\label{AM}
 J (u, Y)= \int_{S^2}  \hat{Y}^A \lt( N_{A} -\frac{1}{4}C_{AB}\nabla_{D}C^{DB} - c\nabla_{A}m   \rt) (u, \cdot), 
\end{align}   where $c=c(u, x)$ is given by \eqref{CAB}.
\end{definition}

In comparison,  the classical angular momentum defined by  Dray-Streubel \cite{DS} is 

\[\tilde{J}=\int_{S^2} \hat{Y}^A(N_A-\frac{1}{4} C_{A}^{\,\,\, D} \nabla^B C_{DB}).\]

The last term $ - \int_{S^2} c\nabla_{A}m   (u, \cdot)$ in the definition of $J$ indeed corresponds to the reference term in the Hamiltonian theory.

\subsection{The proof of supertranslation invariance}
Suppose  $(\bar{u}, \bar{x})$ is another Bondi-Sachs coordinate system, we define similarly: 
\begin{equation}
{J} (\bar{u}, \bar{Y})=\int_{S^2}    \hat{Y}^A \left(\bar{N}_A-\frac{1}{4}\bar{C}_{A}^{\,\,\,\,D}\bar{\nabla}^B \bar{C}_{DB}     - \bar{c}\bar{\nabla}_{A}
\bar{m}         \right) (\bar u, \cdot),
\end{equation} where $\bar{Y}=\hat{Y}^A \frac{\partial}{\partial \bar{x}^A}$ is a rotation BMS field that is tangent to $\bar{u}$ cuts.

 Two rotation BMS fields $\bar{Y}$ and ${Y}$ are said to be related by the supertranslation $f$ if, in the $(u, x)$ coordinate system\begin{equation}\label{BMS_related}\bar{Y}=\hat{Y}^A\frac{\partial}{\partial x^A}+\hat{Y}(f)\frac{\partial}{\partial u}
\text{ and }  Y=\hat{Y}^A \frac{\partial}{\partial x^A}\end{equation} for a rotation Killing field $\hat{Y}$ on $S^2$. In this case, $\bar{Y}$ is tangent to the $\bar{u}$ cuts while ${Y}$ is tangent to the ${u}$ cuts.

 The total flux of the angular momentum \eqref{AM} is defined to be \[\delta {J} (Y)=\lim_{u\rightarrow +\infty} {J}(u, Y)-\lim_{u\rightarrow -\infty} {J}(u, Y).\] Under a supertranslation, 
  \[\delta {J} (\bar{Y})=\lim_{\bar{u}\rightarrow +\infty} {J}(\bar{u}, \bar{Y})-\lim_{\bar{u}\rightarrow -\infty} {J}(\bar{u}, \bar{Y}).\]

When two Bondi-Sachs coordinates are related by a supertranslation, one shows that (see \eqref{supertranslation} below) \begin{equation}m(+)=\lim_{u\rightarrow + \infty} m(u, x)=\lim_{\bar{u} \rightarrow + \infty} \bar{m}(\bar{u}, x) \text{ and } m(-)=\lim_{u\rightarrow  -\infty} m(u, x)=\lim_{\bar{u} \rightarrow - \infty} \bar{m}(\bar{u}, x)\end{equation} are two functions on $S^2$ that are invariant under supertranslation.

\begin{theorem} Under condition \eqref{news_decay}, suppose two Bondi-Sachs coordinate systems  are related by a supertranslation $f$, and $Y$ and $\bar{Y}$ are rotation BMS fields related by $f$ \eqref{BMS_related}. Then
\begin{equation}\label{flux_invariant}\delta {J}(\bar{Y})-\delta {J}(Y)= -\int_{S^2} \lt(2 f_{\ell\leq 1} \hat{Y}^A\nabla_A(m(+)-m(-)) \rt), \end{equation}
where $f=f_{\ell\leq 1}+f_{\ell\geq 2}$ is the decomposition into the corresponding harmonic modes.
\end{theorem}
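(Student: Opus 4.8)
The plan is to reduce the statement to a comparison of the $1$-form integrand of \eqref{AM} in the two Bondi--Sachs systems, evaluated in the limits $u \rw \pm\infty$ and $\bar u \rw \pm\infty$, and to exploit that the decay hypothesis \eqref{news_decay} annihilates every term in the supertranslation laws that carries a factor of the news $N_{AB}$. Write $\Phi_A = N_A - \tfrac14 C_{AB}\nabla_D C^{DB} - c\,\nabla_A m$ for the integrand and $\bar\Phi_A$ for its barred analogue. Since $\bar x = x$ in \eqref{coord_change}, the tangential fields in $J(u,Y)$ and $J(\bar u,\bar Y)$ are the \emph{same} Killing field $\hat Y^A$, so
\[
\delta J(\bar Y) - \delta J(Y) = \int_{S^2}\hat Y^A\lt[\bigl(\bar\Phi_A - \Phi_A\bigr)(+) - \bigl(\bar\Phi_A - \Phi_A\bigr)(-)\rt],
\]
where $(\pm)$ denotes the $u \rw \pm\infty$ limit, which coincides with the $\bar u \rw \pm\infty$ limit because $f$ is bounded. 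The whole task is thus to compute $(\bar\Phi_A - \Phi_A)(\pm)$ from the transformation laws in the no-news limit.

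First I would record the supertranslation laws for $m$, $C_{AB}$ and $N_A$ (these are \eqref{supertranslation}; see also \cite{CJK}) and pass to the limit $N_{AB}\rw 0$. The mass aspect obeys $\bar m(\pm) = m(\pm)$, which is exactly the invariance asserted just before the theorem. The shear transforms by a pure trace-free Hessian, $\bar C_{AB}(\pm) = C_{AB}(\pm) - (2\nabla_A\nabla_B f - \sigma_{AB}\Delta f)$; by the decomposition \eqref{CAB} this shifts the closed potential by $\bar c(\pm) = c(\pm) - 2 f_{\ell\geq 2}$ and leaves the co-closed potential $\underline c(\pm)$ untouched, the $\ell\leq 1$ part of $f$ dropping out of the trace-free Hessian. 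The angular aspect $N_A$ carries the genuinely complicated law, which in the no-news limit reduces to $N_A(\pm)$ plus a term proportional to $m(\pm)\nabla_A f$ and a remainder bilinear in $C(\pm)$ and $\nabla f$.

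Substituting these into $\bar\Phi_A - \Phi_A$, I would sort the result into an $m$-sector and a $C$-sector. In the $m$-sector, the correction $-c\,\nabla_A m$ produces $+2 f_{\ell\geq 2}\,\nabla_A m(\pm)$ (using $\bar m(\pm)=m(\pm)$ and $\bar c(\pm)=c(\pm)-2f_{\ell\geq2}$), and the $m(\pm)\nabla_A f$ piece of the $N_A$ law produces the remaining mass term. Contracting with $\hat Y^A$ and integrating by parts, the Killing identities $\nabla_A\hat Y^A = 0$ and $\nabla_{(A}\hat Y_{B)} = 0$ convert $\int_{S^2}\hat Y^A m(\pm)\nabla_A f$ into $-\int_{S^2} f\,\hat Y^A\nabla_A m(\pm)$, so the $m$-sector collapses to a multiple of $\int_{S^2} f\,\hat Y^A\nabla_A m(\pm)$. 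The role of the $c\,\nabla m$ correction is now transparent: its $f_{\ell\geq2}$ contribution is tuned precisely to cancel the $\ell\geq2$ part coming from the $N_A$ law, so that only $f_{\ell\leq1}$ survives, with the net coefficient on $f_{\ell\leq1}$ equal to $-2$. Forming the difference $(+)-(-)$ then yields $-\int_{S^2} 2 f_{\ell\leq1}\,\hat Y^A\nabla_A(m(+)-m(-))$, the right-hand side of \eqref{flux_invariant}.

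The main obstacle is the $C$-sector: I must show that the entire collection of terms bilinear in $C(\pm)$ and $\nabla f$ --- those descending from the $N_A$ transformation, those from $-\tfrac14 C_{AB}\nabla_D C^{DB}$, and the purely quadratic $(\nabla\nabla f)$ terms --- integrates to zero against the rotation Killing field $\hat Y$ at each end separately. This is precisely the computation that vindicates the Chen--Wang--Yau integrand over the Dray--Streubel one: it is engineered so that the $\ell\geq2$ supertranslation anomaly of $\tilde J$ is removed. Concretely I would repeatedly integrate these bilinears by parts on $S^2$, commute covariant derivatives (picking up curvature terms with Gauss curvature $1$), and invoke $\nabla_{(A}\hat Y_{B)}=0$ to recognize each surviving integrand either as a total divergence or as a contraction annihilated by the isometry generated by $\hat Y$. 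Carrying this out with the exact numerical coefficients of \eqref{supertranslation} is the technical heart of the proof; everything else is bookkeeping made possible by the news decay \eqref{news_decay}.
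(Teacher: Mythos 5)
Your opening reduction is where the argument breaks down. Under the stated hypothesis \eqref{news_decay} alone, the integrand $\Phi_A=N_A-\tfrac14C_{AB}\nabla_DC^{DB}-c\nabla_Am$ has \emph{no pointwise limit} as $u\to\pm\infty$, because the angular momentum aspect itself does not: by the Einstein equation \eqref{Einstein}, $\partial_uN_A\to\nabla_Am(\pm)-\tfrac14\nabla^D\bigl(\nabla_D\nabla^EC_{EA}(\pm)-\nabla_A\nabla^EC_{ED}(\pm)\bigr)$, which is generically nonzero, so $N_A$ grows linearly in $u$. Only the integrated quantity $J(u,Y)$ converges, and only because these non-decaying terms are annihilated when paired with the divergence-free Killing field $\hat Y^A$ (this is exactly how the paper obtains \eqref{du_classical_AM}). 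Consequently your identity $\delta J(\bar Y)-\delta J(Y)=\int_{S^2}\hat Y^A\bigl[(\bar\Phi_A-\Phi_A)(+)-(\bar\Phi_A-\Phi_A)(-)\bigr]$ is not meaningful as written. The natural repair --- interpreting $(\bar\Phi_A-\Phi_A)(\pm)$ via the transformation laws, i.e.\ as $\lim\,[\bar\Phi_A(\bar u,x)-\Phi_A(\bar u+f(x),x)]$, which does exist --- does not save it: one would then need $\lim_{\bar u\to\pm\infty}\int_{S^2}\hat Y^A\Phi_A(\bar u+f(x),x)$ to equal $\lim_{u\to\pm\infty}J(u,Y)$, and these differ by $\int_{S^2}f\,\hat Y^AG^{\pm}_A$, where $G^{\pm}_A$ is the nonvanishing limit of $\partial_u\Phi_A$ above (the factor $f$ destroys the cancellation against $\hat Y^A$). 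That discrepancy is of the same order as the right-hand side of \eqref{flux_invariant} --- its $m$-part is $\int_{S^2}f\hat Y^A\nabla_A(m(+)-m(-))$ --- so the mode-by-mode bookkeeping you describe (the $f_{\ell\geq2}$ part of $c\nabla_Am$ cancelling against the $N_A$ law, the net coefficient $-2$ on $f_{\ell\leq1}$) cannot be carried out correctly within your scheme.

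This structural problem is precisely what the paper's proof is engineered to avoid, which explains the difference in routes: the paper never uses, and never needs, the transformation law of $N_A$ --- the law it emphasizes is ``rather complicated'' and which you would have to state exactly, including all bilinear terms. Instead it uses \eqref{Einstein} and the Killing identities to write the Dray--Streubel flux as the manifestly convergent double integral \eqref{delta_JY}, bilinear in $C$ and the news; inserts only the simple transformation laws \eqref{supertranslation} and \eqref{news}; changes variables $u=\bar u+f$; and then converts the result, via the other Einstein equation \eqref{du_m} and Fubini, into $-2\int_{S^2}f\hat Y^A\nabla_A(m(+)-m(-))$. Adding the boundary contribution $+2\int_{S^2}f_{\ell\geq2}\hat Y^A\nabla_A(m(+)-m(-))$ from the $-c\nabla_Am$ term (this piece of your sketch does agree with the paper, via \eqref{mass_shear_diff}) yields \eqref{flux_invariant}. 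Finally, even within your own framework, the vanishing of the ``$C$-sector'' --- which you yourself identify as the technical heart --- is asserted rather than proved, so the proposal would be incomplete even if the reduction were sound.
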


\begin{proof}
The vacuum Einstein equation implies (see \cite{CJK, MW}):
\begin{equation}\label{Einstein} \begin{split}
\partial_u N_A &= \na_A m -\frac{1}{4}\nabla^D(\nabla_D \nabla^E C_{EA}-\nabla_A \nabla^E C_{ED}) \\
&\quad +\frac{1}{4}\nabla_A(C_{BE} N^{BE})-\frac{1}{4}\nabla_B (C^{BD} N_{DA})+\frac{1}{2} C_{AB}\nabla_D N^{DB}.\end{split}\end{equation}
A rotation Killing field $\hat{Y}$ satisfies $\nabla_A \hat{Y}^A=0$ and $\nabla^A\hat{Y}^D+\nabla^D \hat{Y}^A=0$ and thus
\[ \int_{S^2} \hat{Y}^A \partial_uN_A= \int_{S^2} \hat{Y}^A\lt[-\frac{1}{4}\nabla_B (C^{BD} N_{DA})+\frac{1}{2} C_{AB}\nabla_D N^{DB}\rt].\]

Therefore,
\begin{equation}\label{du_classical_AM}\begin{split}&\partial_u  \tilde{J}(u, Y)=\frac{1}{4}\int_{S^2}  \hat{Y}^A  \lt[C_{AB}\nabla_{D}N^{BD} -N_{AB}\nabla_{D}C^{BD}-\nabla_B (C^{BD} N_{DA}) \rt].\end{split}\end{equation}

According to \eqref{AM}, the total flux $\delta J (Y)$ is thus 
\begin{equation}\begin{split}\label{AM_flux}\delta J(Y)=\delta \tilde{J}(Y) - \lt[\int_{S^2}\hat{Y}^A  c\nabla_{A}m\rt]_{u=-\infty}^{u=+\infty},\end{split}  \end{equation}
where \begin{equation}\label{delta_JY} \delta \tilde{J}(Y)=\frac{1}{4}\int_{-\infty}^\infty \int_{S^2}\hat{Y}^A  \lt[C_{AB}\nabla_{D}N^{BD} -N_{AB}\nabla_{D}C^{BD}-\nabla_B (C_{BD} N_{DA}) \rt] (u, \cdot) du.\end{equation}

For a supertranslation \eqref{coord_change}, it is known that the mass aspect $\bar{m} (\bar u, x)
$, the shear $\bar{C}_{AB}(\bar u,x)$  in the $(\bar u, \bar{x})$ coordinate system are related to the mass aspect ${m} (u, x)
$, the shear ${C}_{AB}(u,x)$ in the $(u, x)$ coordinate system through:
\begin{equation}\label{supertranslation}\begin{split}
\bar{m}(\bar u,x) &= m(\bar u+f,x) + \frac{1}{2} (\na^B N_{BD})(\bar u+f,x) \na^D f \\
&\quad + \frac{1}{4} (\partial_{u}N_{BD})(\bar u+f,x) \na^B f\na^D f + \frac{1}{4} N_{BD}(\bar u+f,x) \na^B\na^D f\\
\bar{C}_{AB}(\bar u,x) &= C_{AB}(\bar u+ f(x),x) - 2 \na_A\na_B f + \Delta f \sigma_{AB} \\
\end{split}\end{equation}
See \cite[(C.117) and (C.119)]{CJK} for example. 

The decay condition of the news \eqref{news_decay} implies that the limits of the mass aspect and the shear satisfy
\begin{equation}\label{mass_shear_diff} \lim_{\bar u \rw \pm\infty} \bar {m}(\bar u,x) =  \lim_{u \rw \pm\infty}  m(u,x),
\lim_{\bar u\rw\pm\infty} \bar{C}_{AB}(\bar u,x) = \lim_{u\rw \pm\infty} C_{AB}(u,x) -2\na_A\na_B f + \Delta f \sigma_{AB}. 
\end{equation}
In particular, the limit of the potential $c$ satisfies:
\[\lim_{\bar u\rw\pm\infty} c(\bar u,x) = \lim_{u\rw \pm\infty} c(u,x) -2 f_{\ell\geq 2}. \]  It follows that the contribution from the second term of \eqref{AM_flux} in the difference $\delta {J} (\bar{Y})- \delta {J} (Y)$ is \begin{equation}\label{second}\int_{S^2} [2f_{\ell\geq 2} \hat{Y}^A \na_A (m(+)-m(-))].\end{equation}

On the other hand, the contribution from the first term in \eqref{AM_flux}, or $\delta \tilde{J} (\bar{Y})-\delta \tilde{J} (Y)$, after substituting \eqref{supertranslation}, integration by parts, and change of variables, is shown to be \cite{CKWWY}, \[ \frac{1}{4} \int_{-\infty}^{+\infty} \lt[ \int_{S^2} f \hat{Y}^A \na_A \big( N_{BD}N^{BD} - 2\na^B\na^D N_{BD} \big) \rt] du.\] At this point, we invoke the vacuum Einstein equation \eqref{du_m}, change the order of integration (Fubini's Theorem applies due to \eqref{news_decay}), and rewrite the last integral as \[\int_{S^2} [-2f \hat{Y}^A \na_A (m(+)-m(-))].\] In view of \eqref{second}, $\delta J (\bar{Y})-\delta J (Y)$ is given by \eqref{flux_invariant}. 
\end{proof}

By \eqref{EM}, the total flux of linear momentum is $\delta P^k=2\int_{S^2} (m(+)-m(-)) \tilde{X}^k$. It follows that 
\begin{equation}\label{AM_flux_transf}\begin{split}
\delta J (\bar{Y})  &=  \delta J (Y)+  \alpha_i \varepsilon^{ik}_{\;\;\;j} \delta P^j, \text{ if } f = \alpha_0 + \alpha_i \tilde X^i+f_{\ell\geq 2} \text{ and }\hat{Y}^A=\epsilon^{AB} \nabla_B\tilde{X}^k \end{split}
\end{equation} 
In particular, if $f$ is of harmonic mode $\ell\geq 2$, $\delta J (\bar{Y})=\delta J (Y)$ is invariant. Therefore, $\delta J$ is invariant under any $\ell\geq 2$ supertranslation.

For an ordinary translation $f=\alpha_0+\alpha_i \tilde{X}^i$, the above formula can be simplified to 
\[\delta {J} (\bar{Y}) -\delta {J}(Y)= \alpha_i \epsilon^{ik}_{\,\,\,\, j}\delta P^j \text{ for } \hat{Y}^A=\epsilon^{AB}\nabla_B \tilde{X}^k,\] where $\delta P^j$ is the total flux of Bondi-Sachs linear momentum.

This is the classical transformation law of angular momentum when the system is ``ordinarily" translated. The angular momentum carried away by gravitational radiation can be defined free of any ambiguity.

\section{Appendix}

In the appendix, we explicitly write down a Poincar\'e transformation in the Minkowski spacetime as a BMS transformation between Bondi-Sachs coordinates.
 Suppose a Poincar\'e transformation is given between two Cartesian coordinates $\bar{x}^\alpha, \alpha=0, 1, 2, 3$ and $x^\alpha, \alpha=0, 1, 2, 3$ of the Minkowski spacetime with
 \[\bar{x}^\alpha=A^\alpha_\beta x^\beta+T^\alpha,\] such that $A\in O(3, 1)$ and $T^\alpha\in \mathbb{R}^{3,1}$. We shall use the indexes $i, j, k=1, 2, 3$ and $\bar{x}^0=\bar{t}, x^0=t$. To convert to the corresponding Bondi-Sachs coordinates we have $u=t-r$ and $\bar{u}=\bar{t}-\bar{r}$, where $\bar{r}=\sqrt{\sum_{i=1}^3 (\bar{x}^i)^2}$ and $r=\sqrt{\sum_{i=1}^3 (x^i)^2}$. We also denote $\tilde{x}^i=\frac{x^i}{r}$. 
 
 We compute the expansions of $\bar{r}$ and $\bar{u}$ in inverse powers of $r$ in the following. The expansion of $\bar{r}$ is given by:
 
 \[\begin{split}
 \bar{r}&=\sqrt{\sum_{i=1}^3(A^i_\beta x^\beta+T^i)^2}\\
 &=\sqrt{\sum_{i=1}^3[A_0^i(u+r)+r A^i_j \tilde{x}^j+T^i]^2}\\
 &=r \sqrt{\sum_{i=1}^3[A_0^i + A^i_j \tilde{x}^j    +r^{-1}(A_0^i u+T^i)] ^2}\\
 &=r \sqrt{\sum_{i=1}^3(A_0^i + A^i_j \tilde{x}^j)^2    +2 r^{-1}(A_0^i+A_j^i\tilde{x}^j)(A_0^i u+T^i)+O(r^{-2}) }\\
  \end{split}\]
  
  Since $(1, \tilde{x}^1, \tilde{x}^2, \tilde{x}^3)$ is a null vector, so is its image under an $O(3,1)$ action, and thus we have
  \[\sum_{i=1}^3(A_0^i + A^i_j \tilde{x}^j)^2=(A_0^0 + A^0_j \tilde{x}^j)^2.\]
  
  Therefore, we obtain that 
  \[\bar{r}= r (A_0^0 + A^0_j \tilde{x}^j)+\frac{(A_0^i+A_j^i\tilde{x}^j)(A_0^i u+T^i)}{ (A_0^0 + A^0_j \tilde{x}^j)}+O(r^{-1}).\]
 We proceed to compute $\bar{u}$.
   \[\begin{split}\bar{u}&=\bar{x}^0-\bar{r}\\
  &=A_0^0(u+r)+rA_i^0\tilde{x}^i+T^0-\bar{r}.\\
  \end{split}\]
  
 Comparing with the expansion of $\bar{r}$, we derive that the leading coefficient of $r^1$ is zero. The coefficient of the constant order term ($r^0$) is then
 \[ \begin{split}&A_0^0 u+T^0-\frac{(A_0^i+A_j^i\tilde{x}^j)(A_0^i u+T^i)}{ A_0^0 + A^0_j \tilde{x}^j}\\
 &=u\frac{A_0^0(A_0^0 + A^0_j \tilde{x}^j)-A_0^i(A_0^i+A_j^i\tilde{x}^j)}{A_0^0 + A^0_j \tilde{x}^j}+\frac{  T^0(A_0^0 + A^0_j \tilde{x}^j)-T^i(A_0^i+A_j^i\tilde{x}^j)}{A_0^0 + A^0_j \tilde{x}^j}\\
 &=\frac{u}{A_0^0 + A^0_j \tilde{x}^j}+\frac{  T^0(A_0^0 + A^0_j \tilde{x}^j)-T^i(A_0^i+A_j^i\tilde{x}^j)}{A_0^0 + A^0_j \tilde{x}^j}\\
 &=K(u+\tilde f) \end{split}, \] where $\tilde f= T^0(A_0^0 + A^0_j \tilde{x}^j)-T^i(A_0^i+A_j^i\tilde{x}^j)$ is of mode $\ell=1$ and $K=\frac{1}{A_0^0 + A^0_j \tilde{x}^j}$. Note that we use the $O(3,1)$ invariance in the second equality above. An ordinary translation thus corresponds to the case $K=1$ or $A_0^0=1, A^0_j=0, j=1, 2, 3$.



%
%

\end{document}